\title{Irreversible Failure Reverses the Value of Information
}
\author{
Nicholas H. Kirk$^{1,2}$\thanks{Email: \texttt{sbsdfs26009@said.oxford.edu}}\\
$^{1}$ Markovian Group\\
$^{2}$ Sa\"id Business School, University of Oxford
}
\date{}
\newtheorem{definition}{Definition}
\newtheorem{assumption}{Assumption}
\newtheorem{proposition}{Proposition}
\newtheorem{theorem}{Theorem}
\newtheorem{lemma}{Lemma}
\newtheorem{remark}{Remark}
\newcommand{\E}{\mathbb{E}}
\newcommand{\RR}{\mathbb{R}}
\begin{document}
\maketitle

\begin{abstract}
We study dynamic games with hidden states and absorbing failure, where belief-driven actions can trigger irreversible collapse. In such environments, equilibria that sustain activity generically operate at the boundary of viability. We show that this geometry endogenously reverses the value of information: greater informational precision increases the probability of collapse on every finite horizon. We formalize this mechanism through a limit-viability criterion, and model opacity as a strategic choice of the information structure via Blackwell garbling. When failure is absorbing, survival values become locally concave in beliefs, implying that transparency destroys equilibrium viability while sufficient opacity restores it. In an extended game where agents choose the information structure ex ante, strictly positive opacity is necessary for equilibrium survival.

The results identify irreversible failure—not coordination, misspecification, or ambiguity—as a primitive force generating an endogenous demand for opacity in dynamic games.
\end{abstract}

\noindent \textbf{Keywords:} irreversible failure, strategic opacity, weaponized information, limit-viability, Blackwell garbling, absorbing states, dynamic games.

\noindent \textbf{JEL:} C73, D82, D83, G01, H12.

\section{Introduction}

Information is widely viewed as stabilizing economic behavior: greater transparency improves
learning, disciplines actions, and enhances efficiency. This presumption fails when failure is
irreversible. In many dynamic environments—such as bankruptcy, sovereign default, regime collapse,
or revolt—even small belief fluctuations can trigger actions that permanently destroy the system.
In such settings, greater informational precision can increase, rather than reduce, the likelihood
of catastrophic outcomes.

This paper provides a structural explanation for this reversal. We study dynamic games with hidden
states and absorbing failure, in which actions taken under sufficiently adverse beliefs can trigger
permanent collapse. Absorbing failure fundamentally alters the geometry of continuation values:
it transforms otherwise smooth payoff functions into cliff-shaped survival values. 
Prior work on sustainable behavior under irreversible failure notes that equilibria sustaining activity typically operate at the boundary of viability, leaving no interior buffer against
collapse \citep{kirk2025sustainable}. 
This mechanism has a direct implication for behavior. 
When information is weaponized in this sense, greater informational precision strictly lowers
expected survival near the viability boundary. As a result, agents optimally restrict the
informativeness of the observation channel, not because information is distorted, but because
opacity concavifies survival values and suppresses belief realizations that trigger irreversible
failure. By endogenously garbling information, agents reduce the dispersion of posterior beliefs and suppress
the likelihood of realizations that induce collapse. Strategic opacity is therefore not modeled as misreporting or false statements; it is a strategic
choice of the information structure (garbling) that can be deceptive in effect while remaining
fully Bayesian.

The paper formalizes this logic by modeling opacity as a strategic choice of the information
structure via Blackwell garbling. Outcomes are evaluated using a limit-viability criterion:
equilibria must avoid collapse with arbitrarily high probability on every finite horizon as
informational noise vanishes. This criterion is appropriate for environments with absorbing failure,
where almost-sure survival is unattainable under full-support noise and expected-payoff criteria
can mask fragility. 

Our main results establish that in weaponized-information environments, increased opacity weakly
expands the set of limit-viable outcomes. We show that transparency can induce global collapse:
there exist environments in which no limit-viable equilibrium exists under maximal transparency,
while sufficiently high opacity restores equilibrium existence. When agents can choose opacity
ex ante, any equilibrium that satisfies limit-viability must feature strictly positive opacity. In standard Bayesian persuasion, the curvature of the value function in beliefs is taken as primitive, and information is harmful only when payoffs are exogenously concave.

This paper identifies a broad and economically central
class of dynamic environments in which concavity is endogenously generated by irreversible failure.
Once this geometry is recognized, persuasion logic follows as a consequence rather than an
assumption: opacity emerges as the optimal concavification of survival values.

The core mechanism can be summarized succinctly. In dynamic environments with absorbing failure, the mapping from beliefs to continuation values is discontinuous. When equilibrium actions switch discretely near a failure threshold, posterior concentration under precise information increases exposure to irreversible collapse. Opacity mitigates this risk by suppressing belief realizations that cross the survival cliff. The contribution is not a new equilibrium concept, but a structural impossibility result linking transparency and viability in dynamic games with irreversible failure.

The paper proceeds as follows. Section~\ref{sec:related} situates the contribution relative to the
literature on information design, global games, and fragility. Section~\ref{sec:environment}
describes the environment and the modeling of opacity. Section~\ref{sec:viability} introduces the
limit-viability criterion. Section~\ref{sec:weaponized} formalizes weaponized information and Section~\ref{sec:results} derives the main results. Sections~\ref{sec:benchmark} and~\ref{sec:colonial} present illustrative
applications. Section~\ref{sec:conclusions} concludes.

\section{Related Work}
\label{sec:related}
\paragraph{Relation to global games and crisis fragility.}
A large literature shows that small noise can generate crisis dynamics in coordination environments,
either by selecting a unique equilibrium \citep{morris1998coordination} or, when information is endogenously
generated (e.g., through prices or policy), by sustaining multiplicity and volatility
\citep{AngeletosHellwigPavan2006,AngeletosWerning2006}. The mechanism in the present paper is distinct.
Unlike models in which noise induces panic through coordination effects, the present paper
identifies a class of environments in which \emph{greater precision} itself is destabilizing,
even absent multiplicity or strategic uncertainty.
We do not study equilibrium selection or multiplicity driven by strategic complementarities
under exogenous noise.
Instead, absorbing failure endogenously induces local concavity of survival values in beliefs,
so that greater informational precision can increase exposure to collapse even holding the
strategic environment fixed.
Moreover, the key behavioral margin here is endogenous control of the information structure
(opacity via Blackwell garbling), evaluated under a limit-viability criterion, rather than the
comparative statics of equilibrium thresholds under a fixed information structure.

\paragraph{Global games and vanishing noise.}
The paper is related to the global games literature, where vanishing private noise refines equilibrium
selection by breaking common knowledge and eliminating multiplicity \citep{carlsson1998coordination,morris1998coordination}.
In that literature, increased signal precision is stabilizing: as noise vanishes, behavior converges
smoothly to a selected equilibrium. The mechanism here is different. In global games, noise selects among equilibria; here, precision eliminates equilibrium existence.
With absorbing failure, the mapping from beliefs to outcomes is discontinuous. As informational noise vanishes, posterior concentration can
instead activate failure-triggering actions, causing collapse with positive probability on finite
horizons. Rather than selecting equilibria, transparency can destroy viability altogether.

\paragraph{Disclosure, signaling, and information design.}
Unlike signaling or disclosure models, opacity here does not involve misreporting, persuasion, or
strategic distortion of beliefs \citep{grossman1980disclosure,milgrom1981good}.
Beliefs remain Bayesian and correctly specified. Opacity is modeled as an endogenous choice of the
information structure itself—via Blackwell garbling of signals—prior to play \citep{blackwell1953comparison,kamenica2011persuasion}.
The mechanism is not reputational manipulation or persuasion, but the avoidance of belief-driven
feasibility violations in fragile systems with irreversible failure.

\paragraph{Robust control and ambiguity aversion.}
The results are distinct from robust control and ambiguity-based approaches, which introduce
conservatism through preferences, worst-case beliefs, or model misspecification \citep{hansen2008robust,epstein2003ambiguity}.
Here, agents are fully Bayesian and correctly specified. Opacity is chosen not because agents dislike
uncertainty, but because greater transparency increases exposure to irreversible failure in
environments where information is weaponized.

\paragraph{Viability, sustainability, and absorbing states.}
The focus on absorbing failure connects to work on viability and sustainability in dynamic systems
\citep{aubin1991viability,altman1999constrained}. Absorbing states and irreversible transitions are also
central in models of default, regime collapse, and institutional breakdown \citep{acemoglu2005institutions}.
The contribution of this paper is to show that, under irreversible collapse, the information structure
itself becomes a strategic margin: agents may optimally coarsen information to preserve survival as
informational frictions vanish.

\paragraph{Relation to learning and misspecification.}
A growing literature explains fragility through learning dynamics, model misspecification,
or robustness concerns, in which agents act conservatively because beliefs are distorted or
because worst-case reasoning is introduced \citep{epstein2003ambiguity,hansen2008robust}.
Related work on rare disasters and tail risks emphasizes belief updating under infrequent
but catastrophic events, showing how pessimistic learning can generate persistent
macroeconomic effects \citep{KozlowskiVeldkampVenkateswaran2020}.
More recently, misspecification-based equilibrium concepts formalize how agents behave when
their subjective models are systematically incorrect \citep{EspondaPouzo2016}.

The mechanism in the present paper is distinct.
Agents are fully Bayesian, correctly specified, and share common knowledge of the environment.
No ambiguity preferences \citep{epstein2003ambiguity}, worst-case distortions \citep{hansen2008robust},
or misspecified learning dynamics \citep{EspondaPouzo2016} are assumed.
The reversal in the value of information arises even when beliefs converge under vanishing noise:
greater informational precision increases exposure to irreversible failure by activating
belief-triggered actions near viability boundaries.

Accordingly, strategic opacity is not a response to epistemic uncertainty or pessimistic learning
\citep{KozlowskiVeldkampVenkateswaran2020}, but a rational adjustment of the information structure
in environments where correct beliefs themselves can be destructive.
This isolates irreversibility—rather than learning or misspecification—as the primitive force
driving the endogenous demand for opacity.

\paragraph{Relation to work on viability selection.}
Viability selection asks which equilibria survive when perturbations of various nature are exogenous; strategic opacity asks how agents optimally adjust the
observation channel itself once such fragility is recognized. The two are complementary: selection
filters behavior given an information environment, while opacity changes the information environment.

\paragraph{Behavioral interpretation.}
Although the mechanism concerns information, the contribution is behavioral rather than
epistemic. Agents do not misperceive, distort, or manipulate beliefs. Instead, they choose
how much information to process or reveal as part of their strategic behavior. In environments
with irreversible failure, limiting information exposure becomes a rational form of self-protection.

The paper is also related to earlier work on sustainability and equilibrium feasibility in dynamic
games with irreversible failure. Studies on Sustainable Exploitation Equilibrium (SEE) in \citep{kirk2025sustainable} characterize equilibrium behavior that maintains viability under exact or stable beliefs.
The present paper asks a distinct question. Taking the boundary nature of sustainable equilibria as given, the paper studies how agents
optimally shape the information structure itself when viability is fragile with respect to informational precision. In this sense, SEE analyzes feasibility conditional on beliefs,
while strategic opacity analyzes behavior over the information structure when feasibility is
endogenously threatened by belief dispersion.

Accordingly, the paper complements the crisis/global-games literature by identifying an information-design
motive for \emph{less} precision that does not rely on multiplicity, but on irreversible failure. Unlike global games, persuasion, or robust-control models, the present paper shows that
even fully Bayesian agents with correct beliefs may require less information for equilibrium
existence once failure is irreversible.

Table~\ref{tab:comparison-mechanisms} summarizes how this mechanism differs sharply from both global games and
Bayesian persuasion: information here does not select equilibria or reallocate behavior,
but determines whether equilibrium survival is feasible at all.

\section{Environment}
\label{sec:environment}

Time is discrete, indexed by $t = 0,1,2,\ldots$. There is a finite set of players
$N = \{1,\dots,n\}$. The payoff-relevant state $s_t \in S$ evolves according to a
controlled Markov kernel $Q(\cdot \mid s_t, a_t)$ on a compact metric space $S$,
where $a_t = (a_{1t},\dots,a_{nt})$ is the action profile at date $t$.

A subset $F \subset S$ is absorbing: once $s_t \in F$, collapse occurs permanently
and all continuation payoffs are zero.

Each player $i$ has a compact action space $A_i$, and chooses actions
$a_{it} \in A_i$ each period until failure occurs. Flow payoffs are given by
measurable functions
\[
u_i : S \times A \to \mathbb{R},
\quad A := \prod_{i \in N} A_i,
\]
with the normalization that $u_i(s,a)=0$ for all $s \in F$.

Players do not observe the state $s_t$ directly. Instead, conditional on $s_t$,
they observe signals generated by an observation channel
$\kappa_{\varepsilon}(\cdot \mid s_t)$, where $\varepsilon > 0$ parametrizes
informational noise and $\varepsilon \downarrow 0$. Signals satisfy a monotone
likelihood ratio property.

Strategies are measurable mappings from private histories to actions. An
equilibrium is a Perfect Bayesian Equilibrium (PBE)\footnote{The choice of PBE is without loss for the results below; all arguments rely only on sequential rationality and Bayesian updating on the equilibrium path.} of the induced dynamic game,
with beliefs updated by Bayes’ rule wherever possible.

Let $\Delta(S)$ denote the set of Borel probability measures on $S$. Let
$b_t \in \Delta(S)$ denote the \emph{common posterior belief} about $s_t$ induced by
Bayesian updating under the equilibrium strategy profile and the public
information structure. Common posterior means that, conditional on the public
history and the observation channel, players’ beliefs about $s_t$ coincide almost
surely.

For any strategy profile $\sigma$ and any finite horizon $T < \infty$, define the
finite-horizon survival value
\[
V_T(b;\sigma) \equiv 1 - \Pr_{\sigma}(\tau_F \le T \mid b_0 = b),
\]
that is, the probability of avoiding the absorbing failure set up to time $T$
starting from belief $b$.

All results below are invariant to the number of players and do not rely on
symmetry or on specific payoff functional forms.

\subsection*{Equilibrium and Information Structure}

We impose the following regularity conditions:

\begin{assumption}[Regularity]
\label{ass:regularity}
\begin{enumerate}[label=(\roman*)]
    \item The Markov kernel $Q(\cdot \mid s, a)$ is continuous in $(s,a)$ and satisfies the Feller property.
    \item Payoff functions $u_i : S \times A \to \mathbb{R}$ are continuous and bounded.
    \item The observation channel $\kappa^\varepsilon(\cdot \mid s)$ has a density with respect to a dominating measure, satisfies the monotone likelihood ratio property globally, and converges weakly to a degenerate distribution at $s$ as $\varepsilon \downarrow 0$.
    \item The failure set $F \subset S$ is closed.
\end{enumerate}
\end{assumption}

\textbf{Common Posterior.} Under Perfect Bayesian Equilibrium with public observation channel $\kappa^\varepsilon$, all players observe signals drawn from the same conditional distribution (subject to independent idiosyncratic noise with vanishing variance). Consequently, posterior beliefs $b_t$ conditional on the public history converge in probability to a common limit as private noise vanishes. We refer to this limit as the \emph{common posterior}. All statements about beliefs and survival values are understood to hold along this common posterior path.

\textbf{Finite Horizon Payoffs.} For any finite horizon $T < \infty$ and initial belief $b \in \Delta(S)$, player $i$'s expected discounted payoff under strategy profile $\sigma$ is
\begin{equation}
    U_i^T(b; \sigma) = \mathbb{E}^\sigma \left[ \sum_{t=0}^T \delta^t u_i(s_t, a_t) \mathbf{1}_{\{t < \tau_F\}} \,\bigg|\, b_0 = b \right],
\end{equation}
where $\delta \in (0,1)$ is the common discount factor.

\section{Endogenous Opacity}

Agents choose the informativeness of the observation channel prior to play.

\begin{assumption}[Opacity as Blackwell garbling]
For $\lambda' \ge \lambda$, the channel $\kappa^{\varepsilon,\lambda'}$ is a Blackwell garbling of
$\kappa^{\varepsilon,\lambda}$. More opaque channels are weakly less informative in Blackwell order.
\end{assumption}

Higher $\lambda$ corresponds to greater opacity.

In the language of information design \citep{blackwell1953comparison,kamenica2011persuasion}, choosing $\lambda$ amounts
to choosing a feasible distribution of posteriors (with fixed mean equal to the prior). More opaque channels induce
posterior distributions that are less dispersed in convex order, which is the comparative statics relevant when survival
values are concave in beliefs (Proposition \ref{prop:jensen}).

\begin{remark}[Opacity versus falsification]
The paper models opacity as a choice of informativeness (Blackwell garbling). This is distinct from
strategic misreporting or lying, though in applications institutional control over measurement and
disclosure may serve similar objectives.
\end{remark}

\paragraph{Control of the information structure.}
Throughout the paper, opacity is a choice over the publicly available observation channel.
Individual agents do not have access to private, costless deviations in signal precision.
This corresponds to environments in which information acquisition is centralized,
technologically constrained, or institutionally governed (e.g.\ accounting standards,
statistical agencies, colonial administrations, or disclosure regimes).
The equilibrium question is therefore whether agents optimally choose the information
structure they jointly face, not whether they secretly acquire additional private signals.

\section{Viability under Vanishing Noise}
\label{sec:viability}
Almost-sure survival is unattainable under full-support noise. We therefore adopt a limit-viability
criterion appropriate for irreversible failure.

\begin{definition}[Limit Viability]
A strategy profile $\sigma$ under opacity $\lambda$ is \emph{limit-viable} if for every finite horizon
$T < \infty$,
\[
\lim_{\varepsilon \downarrow 0} 
\Pr_{\sigma,\kappa^{\varepsilon,\lambda}}(\tau_F \le T) = 0 .
\]
\end{definition}

This requires that collapse becomes arbitrarily unlikely on every finite horizon as informational
noise vanishes.

\section{Weaponized Information}
\label{sec:weaponized}

In static persuasion problems, concavity of the value function in beliefs is an assumption about
preferences. In dynamic environments with absorbing failure, concavity instead emerges from the
interaction of belief updates and irreversible state transitions. The next definition formalizes
this emergent geometry.

\begin{definition}[Weaponized Information]
\label{def:weaponized}
Fix a strategy profile $\sigma$ and a finite horizon $T < \infty$. The environment exhibits \emph{weaponized information} at $(\sigma, T, b^\dagger)$ if there exists $b^\dagger \in \Delta(S)$ and $\delta > 0$ such that:
\begin{enumerate}[label=(\roman*)]
    \item \textbf{Discrete Action Switch:} For at least one player $i$, the equilibrium action correspondence $a_i(b; \sigma)$ induces a discontinuous change in failure hazard at $b^\dagger$: there exists $p > 0$ such that
    \begin{align*}
        &\text{For } b < b^\dagger \text{ with } \|b - b^\dagger\| < \delta: \quad \Pr^\sigma(s_{t+1} \in F \mid s_t, a(b; \sigma)) \geq p, \\
        &\text{For } b > b^\dagger \text{ with } \|b - b^\dagger\| < \delta: \quad \Pr^\sigma(s_{t+1} \in F \mid s_t, a(b; \sigma)) = 0.
    \end{align*}
    
    \item \textbf{Absorbing Failure:} $F$ is absorbing ($s_t \in F$ implies $s_{t+k} \in F$ for all $k \geq 0$) and $u_i(s,a) = 0$ for all $s \in F$ and $a \in A$.
\end{enumerate}
\end{definition}

\begin{remark}[Fragility from noise versus fragility from precision]\label{rem:globalgames_distinction}
In global-games models of crises, small noise can induce runs or attacks through coordination incentives
and higher-order beliefs, with the comparative statics typically phrased in terms of equilibrium selection
or multiplicity under a fixed information structure. Here, the fragility operates through a different
channel: absorbing failure creates a cliff in survival values, making the value function locally concave
in beliefs and thereby reversing the value of information. The resulting preference for opacity is driven
by the geometry induced by irreversibility, not by the equilibrium multiplicity logic of standard crisis
models.
\end{remark}

\begin{remark}[Geometry of the cliff]\label{rem:cliff_geometry}
Absorbing failure generates ``cliff-like'' survival values: for beliefs that place sufficient mass on
safe states, survival is near one, while for beliefs that cross a trigger region the survival probability
drops sharply. Such cliff geometries imply local concavity of $V_T(\cdot;\sigma)$ around the drop, which
reverses the standard positive value of information.
\end{remark}

\begin{proposition}[Negative value of information under local concavity]\label{prop:jensen}

Fix $(\sigma,T)$ and suppose Definition \ref{def:weaponized} holds. Let $\pi$ be any distribution over posteriors
with mean $b$ (i.e.\ $\int \mu\, d\pi(\mu)=b$). Then for all $b\in\mathcal{N}$,
\[
\int V_T(\mu;\sigma)\, d\pi(\mu)\;\le\; V_T(b;\sigma),
\]
with strict inequality for non-degenerate $\pi$ supported in $\mathcal{N}$.
\end{proposition}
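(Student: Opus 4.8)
The plan is to obtain the inequality as a one-line application of Jensen, once $V_T(\cdot;\sigma)$ has been shown to be concave on $\mathcal{N}$ and strictly concave across the cliff at $b^{\dagger}$; all the substance lies in extracting that curvature from Definition~\ref{def:weaponized}.

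First I would describe the local shape of $b\mapsto V_T(b;\sigma)$ on $\mathcal{N}$. By Definition~\ref{def:weaponized}(i) the equilibrium action correspondence $a(\cdot;\sigma)$ is locally constant on $\mathcal{N}\cap\{b>b^{\dagger}\}$ and on $\mathcal{N}\cap\{b<b^{\dagger}\}$, the only switch being at $b^{\dagger}$. Conditioning on the date-$0$ state $s_{0}\sim b$ and using that the continuation is governed by the fixed profile $\sigma$, the kernel $Q$, and the channel $\kappa^{\varepsilon}$ — none of which depends on $b$ except through the realized state, with the residual prior-dependence of the posterior path vanishing as $\varepsilon\downarrow 0$ — one gets $V_T(b;\sigma)=\int \psi\,db$ for a bounded survival kernel $\psi$ that is fixed on each side of $b^{\dagger}$ (generally different on the two sides). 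Hence $V_T(\cdot;\sigma)$ is affine in $b$ on each piece.

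Second I would quantify the drop at $b^{\dagger}$. Crossing from $\{b>b^{\dagger}\}$ to $\{b<b^{\dagger}\}$ raises the date-$0$ failure hazard from $0$ to at least $p$ by Definition~\ref{def:weaponized}(i), and by Definition~\ref{def:weaponized}(ii) the mass absorbed into $F$ carries zero continuation value forever, so the loss is not recouped: $\lim_{b\uparrow b^{\dagger}}V_T(b;\sigma)\le\lim_{b\downarrow b^{\dagger}}V_T(b;\sigma)-p$. Combining the two affine branches with a downward drop of size at least $p$ as $b$ decreases through $b^{\dagger}$, I would verify midpoint concavity on $\mathcal{N}$: for $b_{1},b_{2}$ on the same side it holds with equality; for $b_{1}<b^{\dagger}<b_{2}$ the drop makes it strict, uniformly after shrinking $\mathcal{N}$ so the $p$-drop dominates the slopes of the two affine pieces. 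Concavity of $V_T(\cdot;\sigma)$ on $\mathcal{N}$ then follows from midpoint concavity together with upper semicontinuity, and the same computation gives strict concavity for every convex combination whose barycenter straddles $b^{\dagger}$.

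Finally, with $V_T(\cdot;\sigma)$ concave on $\mathcal{N}$, the bound $\int V_T(\mu;\sigma)\,d\pi(\mu)\le V_T(b;\sigma)$ for every $\pi$ with $\int\mu\,d\pi(\mu)=b\in\mathcal{N}$ and $\mathrm{supp}\,\pi\subset\mathcal{N}$ is exactly Jensen's inequality, and strictness for non-degenerate $\pi$ comes from the strict midpoint inequality across $b^{\dagger}$ (equivalently, from strict concavity on $\mathcal{N}$). I expect the first two steps to be the genuine obstacle: turning the hazard discontinuity of Definition~\ref{def:weaponized} into honest curvature of $V_T$ requires controlling the continuation carefully — the prior-dependence of the posterior path for $\varepsilon>0$, the compounding of survival over the horizon $T$, and the fact that the date-$0$ action also shifts the law of $s_{1}$ — so that the drop at $b^{\dagger}$ is genuinely not undone later and the side-branches are concave rather than merely bounded. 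Everything downstream of that is immediate.
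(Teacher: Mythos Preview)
The paper's own proof of the proposition is literally one sentence: it invokes Jensen's inequality for the (strictly) concave function $V_T(\cdot;\sigma)$ on $\mathcal{N}$, and defers the actual concavity claim to Lemma~\ref{lem:concavity}. Your final paragraph is exactly that step, so the Jensen part is fine and matches the paper.

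The substantive part of your proposal is the attempt to derive concavity directly from Definition~\ref{def:weaponized}, and here there is a genuine gap. You argue that the discrete action switch produces a \emph{level} drop in $V_T$ at $b^{\dagger}$, namely
\[
\lim_{b\uparrow b^{\dagger}} V_T(b;\sigma)\;\le\;\lim_{b\downarrow b^{\dagger}} V_T(b;\sigma)-p,
\]
and then claim this implies midpoint concavity across $b^{\dagger}$. It does not. A function with an upward jump at $b^{\dagger}$ (in the direction of increasing $b$) is never concave on any open neighborhood of $b^{\dagger}$: take $b_1<b^{\dagger}<b_2$ with the midpoint $\tfrac12(b_1+b_2)$ landing on the \emph{low} side; then the left-hand side of the midpoint inequality is near the lower value while the right-hand side picks up half of the higher value, and the inequality fails. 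Shrinking $\mathcal{N}$ cannot repair this, since the failure persists for arbitrarily short chords straddling $b^{\dagger}$ regardless of the slopes of your affine branches. Equivalently: a concave function on an open interval is automatically continuous on its interior, so the level discontinuity you derive is immediately fatal to concavity, and hence to the Jensen step you want to run afterward.

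The paper avoids this by working with a \emph{kink} rather than a jump: in the proof of Lemma~\ref{lem:concavity} the one-step hazard is taken to be continuous at $b^{\dagger}$, with a discontinuity only in its derivative, so that $V_1$ is continuous with a slope break --- which is precisely the geometry that does yield local concavity and propagates through the survival recursion. Whether Definition~\ref{def:weaponized} as literally stated delivers a kink rather than a jump is a separate tension internal to the paper; but your route --- affine branches plus a level drop --- cannot produce concavity, so your first two steps do not go through even modulo the continuation issues (prior-dependence of the posterior path, compounding over $T$) that you correctly flag at the end.
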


\begin{proof}[Sketch]
This is Jensen's inequality applied to the strictly concave function $V_T(\cdot;\sigma)$ on $\mathcal{N}$.
\end{proof}

\begin{lemma}[Absorbing failure generates local concavity]\label{lem:concavity}
Consider a dynamic environment in which failure is absorbing and yields zero continuation value.
Suppose that there exists a region of beliefs in which equilibrium actions switch discretely from
non-failure-inducing to failure-inducing behavior as beliefs cross a threshold. Then for any finite
horizon $T$, the survival value $V_T(\cdot;\sigma)$ is locally concave in a neighborhood of the
corresponding belief threshold.
\end{lemma}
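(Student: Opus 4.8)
The plan is to combine a one-step (Bellman) decomposition of the survival value with an induction on the horizon $T$, using that once the equilibrium action is fixed the belief-update map is affine in the current belief. Write $q(b):=\Pr^{\sigma}(s_{t+1}\in F\mid s_t\sim b,\ a(b;\sigma))$ for the one-step failure hazard from belief $b$ under $\sigma$, and let $b\mapsto\mathcal{B}(\cdot\mid b)$ be the (action-dependent) law of the next-period common posterior conditional on no failure. Along the common-posterior path, for $T\ge1$,
\[
V_T(b;\sigma)\;=\;\bigl(1-q(b)\bigr)\int V_{T-1}(\mu;\sigma)\,\mathcal{B}(d\mu\mid b),\qquad V_0\equiv1,
\]
and by the Discrete Action Switch in Definition~\ref{def:weaponized} I may fix $\delta>0$ so small that on $\mathcal{N}=\{b:\|b-b^\dagger\|<\delta\}$ the equilibrium action is constant on each side of $b^\dagger$, with $q\equiv0$ on the ``safe'' side and $q\ge p$ on the ``failure'' side. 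Using the monotone likelihood ratio property (Assumption~\ref{ass:regularity}(iii)) I would first reduce to the one-dimensional claim along the MLR-ordered belief segment through $b^\dagger$ and run the induction there.

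First I would treat the safe side, where the time-$0$ action is a fixed safe $a^+$, so $b\mapsto Q(\cdot\mid\cdot,a^+)b$ is affine and $q\equiv0$, whence $V_T(b;\sigma)=\int V_{T-1}(\mu;\sigma)\,\mathcal{B}(d\mu\mid b)$. Disintegrating over signals, this is $\int_Y \rho(y\mid b)\,V_{T-1}\!\bigl(n(b,y)/\rho(y\mid b);\sigma\bigr)\,dy$, where the measure $n(b,y):=\kappa^{\varepsilon}(y\mid\cdot)\,Q(\cdot\mid\cdot,a^+)b$ and its total mass $\rho(y\mid b)$ are both linear in $b$; each integrand is the perspective transform of $V_{T-1}(\cdot;\sigma)$ composed with the linear map $b\mapsto(n(b,y),\rho(y\mid b))$, hence concave in $b$ whenever $V_{T-1}(\cdot;\sigma)$ is concave on the posteriors in its range --- the familiar fact that expected posterior value is concave in the prior when the value function is concave. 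An induction on $T$, with base case $V_0\equiv1$ and with $\delta$ shrunk by a fixed factor at each step so that the posteriors reachable from the safe side of $\mathcal{N}$ remain in the region covered by the inductive hypothesis (Feller continuity of $Q$ and weak convergence of $\kappa^{\varepsilon}$ localize the one-step belief dynamics), then yields that $V_T(\cdot;\sigma)$ is concave on the safe portion of $\mathcal{N}$.

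Next I would pin down the cliff at $b^\dagger$. On the failure side, $q(b)\ge p$ forces $V_T(b;\sigma)\le 1-p$ throughout that portion of $\mathcal{N}$, while right-continuity of the safe-side continuation (Feller, plus dominated convergence for the integral against $\mathcal{B}(\cdot\mid b)$) gives $\lim_{b\downarrow b^\dagger}V_T(b;\sigma)=v^+$ along the safe side; the weaponized-information hypothesis is precisely that $v^+>1-p$, i.e.\ that crossing $b^\dagger$ toward the failure side destroys a discrete amount of survival mass. Gluing the concave safe-side piece to the capped failure-side piece across this downward drop yields the cliff geometry of Remark~\ref{rem:cliff_geometry}: $V_T(\cdot;\sigma)$ is locally concave near $b^\dagger$ in the operative sense that every chord joining a failure-side belief to a safe-side belief lies strictly below $V_T(\cdot;\sigma)$ at intermediate beliefs --- exactly the curvature invoked in Proposition~\ref{prop:jensen}.

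The hard part will be the interface at $b^\dagger$ together with the control of continuation beliefs. On the failure side $V_T(\cdot;\sigma)$ is the product of the affine factor $1-q(b)$ and the concave continuation integral and so need not itself be concave; the argument there must lean on the value cap $V_T\le 1-p$ rather than on curvature, and the two pieces must then be glued carefully across the hazard discontinuity at $b^\dagger$ (including checking that $\mathcal{B}(\cdot\mid b)$ places no atom exactly on a lower-horizon threshold, a generic condition securing continuity of the continuation integral). Symmetrically, the inductive step on the safe side is valid only if the one-step-ahead posteriors from $\mathcal{N}$ do not drift into regions where $V_{T-1}(\cdot;\sigma)$ fails to be concave; closing this loop is what forces the localization of the belief dynamics via Assumption~\ref{ass:regularity}(i),(iii) and, if necessary, a horizon-dependent shrinking of the neighborhood. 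These are the steps I expect to demand the most care.
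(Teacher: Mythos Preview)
Your Bellman/induction scaffold matches the paper's (decompose $V_T=V_1\cdot(\mathcal{T}V_{T-1})$, show the continuation operator $\mathcal{T}$ preserves concavity via Bayes plausibility---your perspective-transform argument is exactly the paper's martingale-Jensen step---and induct on $T$), but the treatment at the threshold has a genuine gap. You take the level jump in Definition~\ref{def:weaponized} at face value: hazard $\ge p$ for $b<b^\dagger$ and $=0$ for $b>b^\dagger$. That forces $V_1$, and hence each $V_T$, to jump \emph{upward} as $b$ increases through $b^\dagger$, and an interior upward jump is incompatible with concavity on any open neighborhood of $b^\dagger$. Your fallback ``operative'' claim---that every chord joining a failure-side belief to a safe-side belief lies below $V_T$---is also false. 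Take $b_1<b^\dagger<b_2$ with $b_2$ close to $b^\dagger$, and choose the weight $\alpha$ on $b_1$ just large enough that $b_\alpha=\alpha b_1+(1-\alpha)b_2$ lands barely below $b^\dagger$; the chord height $\alpha V_T(b_1)+(1-\alpha)V_T(b_2)$ is then close to $V_T(b_2)\approx v^+>1-p$, while your own cap gives $V_T(b_\alpha)\le 1-p$, so the chord sits \emph{above} the graph. The cap supplies an upper bound on the failure side precisely where a lower bound would be needed, so the gluing cannot close.

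The paper's key move (Appendix Step~0) is to \emph{re-specify} the trigger as a continuous hazard with only a one-sided \emph{derivative} jump at $b^\dagger$---a slope kink rather than a level jump---so that $V_1$ is continuous and the kink can serve as a concave corner. The induction then multiplies this kinked-but-continuous $V_1$ against a concave $\mathcal{T}V_{T-1}$ and tracks the kink magnitude, arguing it is bounded below uniformly in $T$; this yields a fixed neighborhood $\bar N$, so your horizon-dependent shrinking is not needed either. To repair your route you must either adopt the slope-kink reformulation or retreat to a one-sided statement on the safe half $\{b>b^\dagger\}$; gluing a concave safe piece to a merely capped failure piece across a level discontinuity cannot deliver the two-sided local concavity the lemma asserts, nor the full Jensen inequality used in Proposition~\ref{prop:jensen}.
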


\begin{proof}[Sketch]
Absorbing failure truncates continuation payoffs discontinuously. When equilibrium actions change
discretely at a belief threshold, the probability of survival up to horizon $T$ drops sharply as
beliefs cross that threshold. This creates a cliff in $V_T$, implying local concavity: mean-preserving
spreads of beliefs shift probability mass into the failure region without increasing survival above
one on the safe side. Formalizing this argument yields the claim.
\end{proof}

\begin{remark}[No static analogue]
The concavity identified in Lemma \ref{lem:concavity} has no analogue in static
Bayesian persuasion problems without absorbing states: it arises endogenously from irreversible
failure interacting with belief-triggered actions, rather than from primitive curvature of payoffs.
\end{remark}

\begin{assumption}[Opacity generates mean-preserving spreads of posteriors]\label{ass:opacity_spreads}
For $\lambda'\ge \lambda$, the induced posterior under $\kappa^{\varepsilon,\lambda'}$ is a mean-preserving contraction
of the induced posterior under $\kappa^{\varepsilon,\lambda}$ (equivalently, the distribution of posteriors under higher
opacity is a mean-preserving spread under lower opacity), conditional on the same prior.
\end{assumption}

\begin{remark}[Concavification and optimal opacity]\label{rem:concavification}
In the Bayesian persuasion framework \citep{blackwell1953comparison,kamenica2011persuasion}, an agent who chooses an
information structure effectively chooses a distribution of posteriors with a fixed mean. When $V_T(\cdot;\sigma)$ is
locally concave, full transparency is not optimal: the optimal policy \emph{concavifies} the survival value by keeping
posteriors away from the cliff. Strategic opacity is therefore an optimal concavification of the survival function.
\end{remark}

\paragraph{Sufficient primitive condition (trigger mechanism).}
A convenient sufficient condition for local concavity is a \emph{trigger} structure: there exists a region of posteriors
in which equilibrium actions switch discretely to failure-inducing actions (e.g.\ withdrawal, attack, refusal to roll over),
so that small posterior moves can cause a large change in failure hazard. This trigger mechanism is what we refer to as
``information weaponization'' in applications.

\begin{table}[t]
\centering
\caption{Comparison of Mechanisms}
\label{tab:comparison-mechanisms}
\small
\setlength{\tabcolsep}{4pt}
\begin{tabular}{p{3.2cm} p{4.2cm} p{4.6cm} p{3.8cm}}
\hline\hline
\textbf{Feature} 
& \textbf{Global Games (Morris--Shin)} 
& \textbf{Bayesian Persuasion (Kamenica--Gentzkow)} 
& \textbf{This Paper} \\
\hline
Information role 
& Selects equilibrium 
& Shapes behavior 
& Enables survival \\

Concavity source 
& Coordination payoffs 
& Exogenous preferences 
& Endogenous (absorbing failure) \\

Noise effect 
& Refines multiplicity 
& Not applicable 
& Destroys viability \\

Optimal precision 
& Higher precision stabilizes 
& Depends on curvature 
& Lower precision necessary \\

Key margin 
& Threshold selection 
& Posterior distribution 
& Information structure \\

Mechanism 
& Strategic complementarity 
& Persuasion 
& Irreversibility \\
\hline\hline
\end{tabular}
\end{table}

\section{Main Results}
\label{sec:results}

\begin{theorem}[Opacity as a viability refinement]\label{thm:opacity_expands}
Suppose the environment exhibits weaponized information. Fix any finite horizon $T<\infty$. Then among all Blackwell-comparable information structures, maximal transparency minimizes finite-horizon survival, while any sufficiently opaque structure weakly dominates transparency in the limit-viability order. Consequently, any limit-viable equilibrium must be supported by strictly positive opacity.
\end{theorem}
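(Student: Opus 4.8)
The plan is to establish the three assertions in order — maximal transparency minimizes finite-horizon survival, sufficient opacity weakly dominates transparency for limit-viability, and hence strictly positive opacity is necessary — by combining the local concavity of the survival value with the mean-preserving-spread ordering of induced posteriors in the opacity parameter, and then converting the resulting ranking of survival probabilities into a ranking of collapse probabilities on every finite horizon.

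First I would fix $T<\infty$ and the profile $\sigma$ witnessing weaponized information at $b^{\dagger}$, and invoke Lemma~\ref{lem:concavity} to obtain a neighborhood $\mathcal{N}\ni b^{\dagger}$ on which $V_{T}(\cdot;\sigma)$ is concave. For Blackwell-comparable channels $\kappa^{\varepsilon,\lambda}$ and $\kappa^{\varepsilon,\lambda'}$ with $\lambda'\ge\lambda$, Assumption~\ref{ass:opacity_spreads} makes the posterior distribution under the more transparent channel a mean-preserving spread of the one under the more opaque channel. Since the ex ante survival probability under a channel is the expectation of the continuation survival value over its induced posterior distribution, and since that value is concave on $\mathcal{N}$, a mean-preserving spread lowers this expectation — this is Proposition~\ref{prop:jensen} together with the standard fact that mean-preserving spreads reduce expectations of concave functions. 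Hence finite-horizon survival is weakly decreasing in transparency and is minimized at the maximally transparent channel, with the inequality strict as soon as the transparent channel induces a non-degenerate posterior charging $\mathcal{N}$ — the configuration singled out by the discrete-action-switch clause of Definition~\ref{def:weaponized}.

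Next I would convert this survival ranking into a limit-viability ranking. On one side, under a sufficiently uninformative channel the common posterior remains in an arbitrarily small neighborhood of the prior; provided the prior lies on the safe side of the threshold (otherwise no channel is limit-viable and the claim is vacuous), the failure-inducing branch in Definition~\ref{def:weaponized}(i) is never selected, the one-step hazard into $F$ is identically zero, so $\Pr_{\sigma,\kappa^{\varepsilon,\lambda}}(\tau_{F}\le T)=0$ for all $\varepsilon$ and limit-viability holds. On the other side, under maximal transparency with $\varepsilon\downarrow 0$ the common posterior concentrates on the true state and, along the equilibrium path, enters $\{b<b^{\dagger}\}$ with probability bounded away from zero uniformly in $\varepsilon$; there the discrete switch delivers hazard at least $p>0$ each period, so $\liminf_{\varepsilon\downarrow 0}\Pr_{\sigma,\kappa^{\varepsilon,0}}(\tau_{F}\le T)\ge 1-(1-p)^{k}>0$ for some $k\ge 1$ and maximal transparency is not limit-viable. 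Monotonicity of the crossing probability in $\lambda$ (again Assumption~\ref{ass:opacity_spreads}) makes the set of limit-viable opacity levels an up-set, which by the two preceding observations is of the form $[\lambda^{\ast},\infty)$ with $\lambda^{\ast}>0$; the final claim follows, since any equilibrium satisfying limit-viability must use $\lambda\ge\lambda^{\ast}>0$.

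The main obstacle is the assertion, inside the limit-viability argument, that under vanishing noise the equilibrium belief path reaches the failure-triggering region $\{b<b^{\dagger}\}$ with probability bounded away from zero. This is where the hypothesis that $\sigma$ is an activity-sustaining equilibrium operating at the boundary of viability — in the sense of \citet{kirk2025sustainable} — must be used: one has to exclude equilibria that keep the common posterior uniformly bounded away from $b^{\dagger}$ on $[0,T]$, since such an interior buffer would make transparency itself limit-viable and defeat the theorem. The cleanest route is to show that any profile possessing such a buffer would admit a strictly more informative yet still limit-viable perturbation, contradicting the boundary characterization of sustainable equilibria, so that genuinely activity-sustaining equilibria must graze $b^{\dagger}$ and are therefore exposed once noise vanishes; alternatively one can posit directly that $b^{\dagger}$ is crossed with positive probability along the equilibrium path under precise information, which is the operative content of ``weaponization'' in the dynamic setting. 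A secondary, more routine difficulty is passing from a one-shot Jensen comparison at the initial node to the entire belief process: one needs the mean-preserving-spread ordering of Assumption~\ref{ass:opacity_spreads} to hold date-by-date along the common-posterior path, and concavity of $V_{T}(\cdot;\sigma)$ to be available at the continuation beliefs that actually arise, not merely at $b^{\dagger}$.
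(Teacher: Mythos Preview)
Your proposal tracks the paper's argument closely for the first claim: both obtain local concavity of $V_T(\cdot;\sigma)$ from Lemma~\ref{lem:concavity}, invoke Assumption~\ref{ass:opacity_spreads} to order posterior distributions in convex order, and apply Proposition~\ref{prop:jensen} to conclude that expected finite-horizon survival is weakly decreasing in transparency. The paper's Steps~1--4 and your first paragraph are essentially the same argument.

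Where you diverge is in the treatment of the necessity claim. The paper's Step~5 is deliberately \emph{conditional}: it shows only that \emph{if} $\liminf_{\varepsilon\downarrow 0}\Pr_{\sigma,\kappa^{\varepsilon,\underline{\lambda}}}(\tau_F\le T)>0$ under maximal transparency, then any limit-viable equilibrium requires $\lambda^\ast>\underline{\lambda}$; it does not attempt to establish the antecedent within the proof of Theorem~\ref{thm:opacity_expands}, delegating that to the explicit construction in Theorem~\ref{thm:transparency_collapse}. You instead try to argue the antecedent directly --- that under vanishing noise the belief path crosses $b^\dagger$ with probability bounded away from zero --- and you correctly flag this as the hard step. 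Your proposed routes (invoking the boundary characterization of sustainable equilibria from \citet{kirk2025sustainable}, or positing crossing as the operative content of weaponization) are plausible but neither is airtight from the stated primitives; the paper simply avoids this by making the claim conditional. Your version is more ambitious and would be stronger if completed, but the obstacle you identify is real and not resolved by your sketch. The secondary difficulty you raise --- that the one-shot Jensen comparison must propagate date-by-date along the posterior process --- is also a legitimate concern that the paper's proof elides; the paper implicitly relies on the recursion in the proof of Lemma~\ref{lem:concavity} but does not make this explicit in Theorem~\ref{thm:opacity_expands}.
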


\begin{theorem}[Transparency-Induced Collapse in Dynamic Games]\label{thm:transparency_collapse}
There exist dynamic games with absorbing failure and endogenous opacity such that:
(i) under maximal transparency, no limit-viable equilibrium exists; but
(ii) for sufficiently high opacity, at least one limit-viable equilibrium exists.
Moreover, the failure of limit-viability under transparency holds generically for open sets of priors and payoff parameters.
\end{theorem}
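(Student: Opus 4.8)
The statement is existential, so the plan is to exhibit a concrete family of games and verify the two claims plus genericity directly. The simplest witness is a single-agent environment (the argument extends to $n$ symmetric, dominance-solvable players, in line with the paper's invariance claim). I would set $S=\{s_G,s_B,\dagger\}$ with $F=\{\dagger\}$ absorbing and $s_G,s_B$ persistent absent failure; action set $A=\{C,D\}$; a transition rule under which $(s_B,D)$ moves to $\dagger$ with probability $p\in(0,1)$ and stays at $s_B$ otherwise, while every other state--action pair leaves the state fixed and never triggers failure; and flow payoffs $u(s_G,C)=1$, $u(s_G,D)=0$, $u(s_B,C)=-M$, $u(s_B,D)=0$, $u(\dagger,\cdot)=0$, with $M>0$ to be taken small. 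The prior is $\mu=\Pr(s_0=s_G)\in(0,1)$. The base channel $\kappa^{\varepsilon,0}$ is chosen to satisfy Assumption~\ref{ass:regularity}(iii), and I would take the opacity family $\{\kappa^{\varepsilon,\lambda}\}$ to be a chain of Blackwell garblings (consistent with Assumption~2 and Assumption~\ref{ass:opacity_spreads}) with a completely uninformative maximal element $\bar\lambda$.

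The first substantive step is the equilibrium characterization. Write $b=\Pr(s_t=s_G)$ and let $W_C(b),W_D(b)$ be the payoff of playing $C$ resp.\ $D$ today and optimally thereafter. In the vanishing-noise idealization (full information arriving the following period) a short computation gives $W_C(b)-W_D(b)=b-(1-b)M$, so the optimal policy switches at $b^\dagger=M/(1+M)\in(0,1)$: play $C$ for $b>b^\dagger$ and $D$ for $b<b^\dagger$. I would then show that for every small $\varepsilon>0$ the induced belief-MDP has an essentially unique optimal policy of the same threshold form (threshold converging to $b^\dagger$), via a Berge maximum-theorem argument: the $\varepsilon$-indexed Bellman operators converge uniformly and $\dagger$ is absorbing with zero continuation payoff, so on any region bounded away from $b^\dagger$ the limiting strict optimality survives; with a single agent there are no off-path beliefs to exploit, hence for every $\gamma>0$, every PBE plays $D$ throughout $\{b<b^\dagger-\gamma\}$ once $\varepsilon$ is small. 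This profile $\sigma^\star$ exhibits weaponized information in the sense of Definition~\ref{def:weaponized} at $b^\dagger$: the conditioning state $s_t=s_B$ carries mass $1-b>1-b^\dagger>0$ under any belief $b<b^\dagger$ and there $\Pr(s_{t+1}\in F\mid s_B,D)=p>0$, whereas for $b>b^\dagger$ the action is $C$ and $\Pr(s_{t+1}\in F\mid s,C)=0$; part~(ii) of the Definition holds by construction. For the full theorem I fix $\mu\in(b^\dagger,1)$.

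For claim~(i), fix $\lambda=0$ and any PBE family $\{\sigma^\varepsilon\}$. With probability $1-\mu>0$ the true state is $s_B$, which—conditional on no failure—persists, so under transparency the posterior concentrates at $s_B$ and, for all small $\varepsilon$, $b_t<b^\dagger$ with probability near one at every date $t\ge 1$. Hence the equilibrium action is $D$ from $t=1$ onward and failure occurs each period with probability converging to $p$, so
\[
\liminf_{\varepsilon\downarrow 0}\ \Pr_{\sigma^\varepsilon,\kappa^{\varepsilon,0}}(\tau_F\le T)\ \ge\ (1-\mu)\bigl(1-(1-p)^{\,T-1}\bigr)\ >\ 0
\]
for every $T\ge 2$; thus no PBE is limit-viable under transparency. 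This is exactly the cliff mechanism of Lemma~\ref{lem:concavity} and Proposition~\ref{prop:jensen}: transparency pushes the posterior onto the dangerous side of the drop. For claim~(ii), take $\lambda=\bar\lambda$: the posterior remains at $\mu>b^\dagger$ at every date, so the profile ``$C$ forever'' is prescribed; a one-shot-deviation check shows it is sequentially rational at $\mu$ (and at every higher reachable belief) when $M$ is small, hence a PBE, and under it $C$ never triggers failure, so $\Pr_{\sigma,\kappa^{\varepsilon,\bar\lambda}}(\tau_F\le T)=0$ for all $\varepsilon$ and $T$—limit-viable. Theorem~\ref{thm:opacity_expands} then frames this as opacity weakly expanding the limit-viable set.

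Genericity follows because $b^\dagger=M/(1+M)$ depends continuously on $M$, the strict optimality of $C$ at $s_G$ and of $D$ at $s_B$ are open conditions, and claim~(i) uses only $\mu\in(0,1)\setminus\{b^\dagger\}$; so the failure of limit-viability under transparency persists on an open set of priors and payoff parameters, with both claims holding on the open subset where in addition $M$ is small enough. The main obstacle is the equilibrium-characterization step and its $\varepsilon\downarrow 0$ robustness: to obtain ``no limit-viable equilibrium'' one must rule out that some non-stationary or randomized PBE evades the collapse in claim~(i)—i.e.\ show that every PBE plays $D$ on all of $\{b<b^\dagger-\gamma\}$ once $\varepsilon$ is small—and this is where the single-agent structure, the absorbing zero-payoff failure state, and uniform convergence of the Bellman operators do the work. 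The one-shot-deviation check for claim~(ii), the genericity count, and verifying that the chosen garbling family meets Assumption~2, Assumption~\ref{ass:opacity_spreads}, and Assumption~\ref{ass:regularity}(iii) are then routine.
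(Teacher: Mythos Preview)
Your construction is correct and furnishes a valid witness, but it follows a genuinely different route from the paper's. The paper builds a \emph{two-player} coordination game with binary state $\theta\in\{0,1\}$ in which unilateral withdrawal by either player immediately causes absorbing failure; transparency destroys viability because, conditional on the fragile state, concentrating posteriors push at least one player below the equilibrium cutoff, and the genericity claim is verified on the open parameter region $q\in(0,\delta)$. You instead use a \emph{single-agent} decision problem with a persistent bad state $s_B$ in which the agent's own optimal action $D$ (chosen because $C$ yields $-M<0$ there) carries hazard $p$; transparency fails because it reveals $s_B$ with probability $1-\mu$, after which the agent rationally plays the failure-risking action. Your route is more elementary---equilibrium reduces to the optimal policy of a belief-MDP, there are no off-path beliefs to discipline, and the threshold characterization is cleaner---whereas the paper's two-player construction makes strategic interaction essential and ties the example directly to the global-games benchmark of Section~\ref{sec:benchmark}. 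Two caveats worth noting: first, your negative flow payoff $u(s_B,C)=-M$ means the same witness does not carry over to Theorem~\ref{thm:optimal_opacity}, whose hypothesis~(a) requires weakly nonnegative flows; second, under maximal opacity the survival event after a one-shot deviation to $D$ is itself informative (belief jumps to $\mu'=\mu/(\mu+(1-\mu)(1-p))>\mu$), so the one-shot-deviation check must also hold at $\mu'$---you flag this, and for $M$ small it indeed goes through.
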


\begin{theorem}[Optimal Opacity]\label{thm:optimal_opacity}
Consider the extended game in which agents jointly choose $\lambda$ prior to play and then play an equilibrium
of the induced dynamic game under $\kappa^{\varepsilon,\lambda}$. Suppose (a) flow payoffs are weakly nonnegative
and at least one agent receives strictly positive flow payoff on some histories that avoid failure, and
(b) under maximal transparency $\lambda=\underline{\lambda}$, every equilibrium fails limit-viability, i.e.\ there exist
$T<\infty$ and $\delta>0$ such that for every equilibrium $\sigma$,
\[
\liminf_{\varepsilon\downarrow 0}\Pr_{\sigma,\kappa^{\varepsilon,\underline{\lambda}}}(\tau_F\le T)\ge \delta.
\]
If there exists $\bar{\lambda}>\underline{\lambda}$ and an equilibrium $\bar{\sigma}$ under $\lambda=\bar{\lambda}$ that is
limit-viable, then in any equilibrium of the extended game the chosen opacity satisfies $\lambda^\ast>\underline{\lambda}$.
\end{theorem}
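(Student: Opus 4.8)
The plan is to argue by contradiction. Suppose some equilibrium of the extended game selects $\lambda^\ast=\underline{\lambda}$, and let $\sigma$ be the continuation equilibrium played under $\kappa^{\varepsilon,\underline{\lambda}}$. By hypothesis (b), $\liminf_{\varepsilon\downarrow 0}\Pr_{\sigma,\kappa^{\varepsilon,\underline{\lambda}}}(\tau_F\le T)\ge\delta>0$, so collapse by date $T$ has probability bounded away from zero as noise vanishes. Since $F$ is absorbing and $u_i\equiv 0$ on $F$, on the event $\{\tau_F\le T\}$ every agent earns zero from date $\tau_F$ onward, so there is a strictly positive mass of histories on which the \emph{entire} continuation value is destroyed under transparency. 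I would make this quantitative by splitting the horizon at $T$: using boundedness of flow payoffs and $\Pr_\sigma(\tau_F>T)\le 1-\delta$ in the limit, the discounted continuation forgone on the collapse event is bounded below, uniformly in $\varepsilon$, by a strictly positive constant, which gives a ceiling on $\limsup_{\varepsilon\downarrow 0}U_i^{T'}(b_0;\sigma)$ (for all $T'\ge T$) strictly below the value attainable absent collapse.

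Next I would bound the benefiting agent's payoff from below under opacity. Let $i$ be the agent of (a) and $\bar\sigma$ the limit-viable equilibrium at $\bar\lambda$ hypothesized in the statement. Limit-viability gives $\Pr_{\bar\sigma,\kappa^{\varepsilon,\bar\lambda}}(\tau_F\le T')\to 0$ for every finite $T'$, so survival to any horizon has probability tending to one. Reading (a) as the statement that on the equilibrium path of $\bar\sigma$ agent $i$ accrues strictly positive flow payoff on a set of surviving histories of positive probability — which sequential rationality of $\bar\sigma$ secures whenever such histories are reachable — nonnegativity of flow payoffs yields $\liminf_{\varepsilon\downarrow 0}U_i^{T'}(b_0;\bar\sigma)\ge\eta>0$ for some $\eta$ and all large $T'$. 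The core comparison is then to show $U_i^{T'}(b_0;\bar\sigma)>U_i^{T'}(b_0;\sigma)$ for $\varepsilon$ small: the continuation value irreversibly destroyed on the $\delta$-collapse event under transparency must exceed any extra \emph{pre-collapse} flow payoff that transparency delivers. Here the absorbing structure does the work: collapse zeroes the whole tail, so by taking $T'$ large and discounting, the comparison reduces to a fixed strictly positive gap determined by $\delta$, the discount factor, and $\eta$, rather than to a bounded-window payoff difference.

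Having established that agent $i$ strictly prefers $(\bar\lambda,\bar\sigma)$ to $(\underline{\lambda},\sigma)$ in the vanishing-noise limit, I would close the argument at the $\lambda$-stage. Under the convention that agents jointly choosing $\lambda$ can move to any feasible opacity level together with a continuation equilibrium of the induced game, the joint move to $\bar\lambda$ with continuation $\bar\sigma$ strictly raises agent $i$'s payoff; and since flow payoffs are weakly nonnegative and the move only \emph{removes} an absorbing-collapse event, no agent's payoff is depressed by it. Hence $\lambda^\ast=\underline{\lambda}$ is not sequentially rational at the opacity stage — a contradiction — so every equilibrium of the extended game has $\lambda^\ast>\underline{\lambda}$.

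The main obstacle is the payoff-level comparison in the second step together with the no-blocking claim in the third. One must rule out that transparency, although it collapses with probability $\delta$, front-loads enough flow payoff before $\tau_F$ to dominate the survival path of $\bar\sigma$; this requires leaning essentially on absorption (the destroyed object is the entire continuation, not a finite window) and on a precise reading of (a) as a lower bound on \emph{realized} equilibrium value for agent $i$ under $\bar\sigma$, not merely on primitive payoffs. A secondary, modeling-level obstacle is pinning down the protocol by which agents ``jointly choose'' $\lambda$ so that the final-step deviation is well defined; the cleanest route is to require the chosen opacity to be efficient among feasible $(\lambda,\text{continuation equilibrium})$ pairs, under which $\underline{\lambda}$ is excluded precisely because it is dominated by $(\bar\lambda,\bar\sigma)$.
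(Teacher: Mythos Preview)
Your proposal is correct and follows essentially the same route as the paper's proof: set up a contradiction at $\lambda^\ast=\underline{\lambda}$, use assumption (b) to bound collapse probability below under transparency, use assumption (a) together with limit-viability of $\bar{\sigma}$ at $\bar{\lambda}$ to bound the benefiting agent's payoff from below under opacity, and conclude that switching to $(\bar{\lambda},\bar{\sigma})$ is a profitable deviation at the opacity stage. The paper's argument is terser—it simply asserts the payoff comparison and invokes a unilateral profitable deviation without discussing front-loading or the joint-choice protocol—so your explicit flagging of those two obstacles in fact goes beyond what the paper's own proof addresses.
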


\begin{remark}
Opacity is chosen despite improving allocative efficiency under perfect information. The tradeoff is driven
entirely by irreversible failure.
\end{remark}

\section{Illustrative Benchmark: Coordination with Absorbing Failure}
\label{sec:benchmark}
This section provides a fully worked benchmark in the spirit of global games, augmented with absorbing default and endogenous opacity. The benchmark is designed to be
computationally explicit and to illustrate Theorems \ref{thm:opacity_expands}--\ref{thm:optimal_opacity} constructively.

\subsection{Setup}

There are two players $i\in\{1,2\}$. The hidden fundamental $\theta\in\RR$ determines solvency. Each period, each player chooses $a_i \in \{0,1\}$, interpreted as
\emph{continue} ($a_i=1$) or \emph{withdraw} ($a_i=0$). Default is absorbing and occurs if withdrawals push the system below a threshold $\underline{\theta}$.

Players observe private signals
\[
x_i = \theta + \sqrt{\varepsilon}\, \eta_i,
\qquad \eta_i \sim \mathcal{N}(0,1),
\]
where $\varepsilon>0$ is vanishing noise. Opacity is a choice of signal variance: $\lambda$ scales the noise to $\sqrt{\varepsilon}\lambda$, with $\lambda\in[\underline{\lambda},\bar{\lambda}]$.
Thus larger $\lambda$ is a Blackwell garbling of smaller $\lambda$.

\subsection{Equilibrium Characterization}

\textbf{Threshold Strategies.} In a symmetric monotone equilibrium, each player $i$ withdraws if and only if their signal $x_i$ falls below a threshold $x^*(\varepsilon, \lambda)$. The existence and uniqueness of such equilibria follow from standard global games arguments \citep{MorrisShin2003}.

The threshold $x^*(\varepsilon, \lambda)$ is determined by the indifference condition: a player receiving signal $x^* $ must be indifferent between continuing and withdrawing. Given binary state $\theta \in \{0,1\}$ and Gaussian signals $x_i = \theta + \sqrt{\varepsilon \lambda} \eta_i$ with $\eta_i \sim N(0,1)$, the posterior belief is
\begin{equation}
    \Pr(\theta = 1 \mid x_i) = \frac{q \cdot \phi\left( \frac{x_i - 1}{\sqrt{\varepsilon \lambda}} \right)}{q \cdot \phi\left( \frac{x_i - 1}{\sqrt{\varepsilon \lambda}} \right) + (1-q) \cdot \phi\left( \frac{x_i}{\sqrt{\varepsilon \lambda}} \right)},
\end{equation}
where $\phi$ denotes the standard normal density and $q = \Pr(\theta = 1)$ is the prior.

At the indifference threshold, the expected payoff from continuing equals the payoff from withdrawing (normalized to zero):
\begin{equation}
    \delta R \cdot \Pr(\theta = 1 \mid x_i = x^*) \cdot \Pr(\text{other player continues} \mid x_i = x^*) = 0.
\end{equation}
In the symmetric equilibrium, the other player continues if and only if $x_j \geq x^*$. Given that $x_j = \theta + \sqrt{\varepsilon \lambda} \eta_j$ with $\eta_j \sim N(0,1)$ independent of $\eta_i$, we have
\begin{equation}
    \Pr(x_j \geq x^* \mid x_i = x^*, \theta) = 1 - \Phi\left( \frac{x^* - \theta}{\sqrt{\varepsilon \lambda}} \right),
\end{equation}
where $\Phi$ is the standard normal CDF.

\textbf{Posterior Concentration.} As $\varepsilon \downarrow 0$, posterior beliefs concentrate:
\begin{itemize}
    \item If $\theta = 1$: $\Pr(x_i \approx 1) \to 1$, hence posteriors converge to $\Pr(\theta = 1 \mid x_i) \to 1$.
    \item If $\theta = 0$: $\Pr(x_i \approx 0) \to 1$, hence posteriors converge to $\Pr(\theta = 1 \mid x_i) \to 0$.
\end{itemize}

\textbf{Failure under Transparency.} Under minimal garbling $\lambda = \underline{\lambda}$, suppose the true state is $\theta = 1$ and the fundamental is arbitrarily close to the failure threshold $\bar{\theta}$. As $\varepsilon \downarrow 0$, the distribution of $x_i$ concentrates near $1$. However, with positive probability bounded away from zero, one or both players receive signals $x_i < x^*(\varepsilon, \underline{\lambda})$ due to the tail of the Gaussian noise $\eta_i$.

Specifically, for any $\varepsilon > 0$, 
\begin{equation}
    \Pr(x_i < x^* \mid \theta = 1) = \Phi\left( \frac{x^* - 1}{\sqrt{\varepsilon \underline{\lambda}}} \right) > 0.
\end{equation}
Since failure is triggered by unilateral withdrawal (at least one player choosing $W$), the probability of failure at $t=1$ is
\begin{equation}
    \Pr(\tau_F = 1 \mid \theta = 1) = 1 - \left[1 - \Phi\left( \frac{x^* - 1}{\sqrt{\varepsilon \underline{\lambda}}} \right)\right]^2.
\end{equation}
As $\varepsilon \downarrow 0$, the equilibrium threshold $x^*(\varepsilon, \underline{\lambda})$ converges to a finite limit $x^*_\infty < 1$ (determined by the dominance regions of the coordination game). Therefore,
\begin{equation}
    \liminf_{\varepsilon \downarrow 0} \Pr^{\sigma, \kappa^{\varepsilon, \underline{\lambda}}}(\tau_F \leq 1) \geq 1 - \left[1 - \Phi\left( \frac{x^*_\infty - 1}{\sqrt{0^+}} \right)\right]^2 = 1 - 0 = 1 > 0,
\end{equation}
where the limit uses that $\Phi(z) \to 1$ as $z \to +\infty$. Hence, equilibrium behavior under full transparency fails the limit-viability criterion.

\textbf{Restoration under Opacity.} Now consider maximal opacity $\lambda = \bar{\lambda}$ with $\bar{\lambda} \to \infty$. Signals become uninformative: for any realization $x_i$, the posterior converges to the prior,
\begin{equation}
    \Pr(\theta = 1 \mid x_i) \to q \quad \text{for all } x_i,
\end{equation}
as $\lambda \to \infty$ (the noise variance $\varepsilon \lambda \to \infty$ dominates the signal). 

Given that continuing yields expected payoff $qR > 0$ (by the parameter restriction $qR > 0$), while withdrawing yields zero, both players optimally choose $C$ regardless of their signal realization. The strategy profile in which both players always continue constitutes an equilibrium. Because withdrawal never occurs, failure is avoided with certainty:
\begin{equation}
    \Pr^{\sigma, \kappa^{\varepsilon, \bar{\lambda}}}(\tau_F \leq T) = 0 \quad \text{for all } T < \infty \text{ and all } \varepsilon > 0.
\end{equation}
This equilibrium is therefore limit-viable.

\medskip
\textbf{Behavioral Interpretation.} Crucially, opacity is a deliberate choice. Anticipating that transparency induces collapse with positive probability through belief-driven withdrawals, agents optimally select higher opacity $\lambda > \underline{\lambda}$ to suppress trigger realizations and preserve limit-viability. Opacity does not improve allocative efficiency conditional on survival; it alters the information environment so that survival itself becomes feasible.

\section{Colonial Extraction with Revolt Risk as a Persuasion Problem}
\label{sec:colonial}
This section recasts strategic opacity in a colonial setting as a problem of information design
under absorbing failure. The example makes explicit how revolt
risk generates local concavity of survival values and why opacity is optimal when the information
structure is controlled by a principal: It illustrates how the general logic of
weaponized information applies in a principal–controlled information environment, where
opacity is chosen unilaterally rather than jointly. All results follow from the same geometry
identified in Sections 6 and 7.

\subsection{Environment}

A colonial power (the \emph{metropole}) governs a colony over discrete time. The colony’s latent
political stability is summarized by a state $s_t \in [0,1]$. There exists a collapse threshold
$\underline{s}>0$ such that if $s_t \le \underline{s}$, a revolt occurs and colonial control is
irreversibly lost. Failure is absorbing.

The metropole is a unitary decision maker that controls both (i) fiscal policy and (ii) the
information structure through which political stability is assessed. There are no private,
costless deviations in information acquisition.

Each period, the metropole chooses an extraction level $x_t \in [0,\bar{x}]$ (taxation,
requisitions, forced labor). Extraction generates contemporaneous revenue but increases the
hazard of revolt. The colony itself does not act strategically.

State dynamics are given by
\[
s_{t+1} = s_t - \alpha x_t + \varepsilon_t,
\]
where $\alpha>0$ and $\varepsilon_t$ is an i.i.d.\ shock with small variance. Once $s_t \le
\underline{s}$, the process is absorbed.

The metropole does not observe $s_t$ directly. Instead, it commits ex ante to an observation
channel
\[
y_t = s_t + \eta_t, \qquad \eta_t \sim \mathcal{N}(0,\sigma^2(\lambda)),
\]
where $\lambda$ parametrizes opacity: higher $\lambda$ corresponds to noisier signals. The
choice of $\lambda$ is public and fixed prior to play.

Mapping to the baseline model. The state $s_t$ corresponds to the payoff-relevant state
$s_t \in S$ in Section 3, revolt is the absorbing failure set $F$, extraction $x_t$ is the
action $a_t$, and the observation channel $\sigma^2(\lambda)$ is the opacity choice.
Beliefs $b_t$ are sufficient statistics for optimal policy.

\subsection{Payoffs and objective}

The metropole’s per-period payoff from extraction is $\pi(x_t)$, with $\pi'>0$ and $\pi''\le 0$.
There is no continuation payoff after revolt.

For a finite horizon $T$, the relevant value function is the survival-weighted extraction payoff
\[
V_T(b) \equiv \E\!\left[ \sum_{t=0}^{T} \pi(x_t)\mathbf{1}\{t<\tau_F\} \;\middle|\; b_0=b \right],
\]
where $b_t$ denotes the posterior belief over $s_t$ induced by the chosen information structure
and $\tau_F$ is the (absorbing) time of revolt.

\subsection{Optimal policy and cliff geometry}

Fix the optimal extraction policy $x(b)$ under a given information structure. Because
revolt is absorbing and extraction increases the hazard of crossing $s$, the optimal policy
exhibits a trigger structure: there exists $b^\dagger$ such that extraction switches
discretely from positive to zero as beliefs cross $b^\dagger$. This is exactly the trigger
condition in Definition 2.

Under this policy, the survival component of $V_T(b)$ displays a cliff: for beliefs safely above
$b^\dagger$, survival up to horizon $T$ is near certain, while for beliefs crossing the trigger
region, even a single period of aggressive extraction induces revolt with high probability.
As a result, the survival value is locally concave in beliefs around $b^\dagger$.

\subsection{Opacity as optimal concavification}

Under full transparency (low $\lambda$), posterior beliefs concentrate rapidly as signal noise
vanishes. When the true stability $s_t$ lies near the revolt threshold, even small informational
fluctuations generate posterior realizations that cross the trigger region, activating aggressive
extraction and causing revolt with positive probability on finite horizons.

By increasing opacity (higher $\lambda$), the metropole induces a mean-preserving contraction of
posterior beliefs. This keeps beliefs away from the cliff region, eliminates excursions into
failure-triggering posteriors, and restores limit-viability.

Unlike standard persuasion problems where information design shapes behavior conditional on participation, opacity here makes participation (survival) itself feasible by preventing belief realizations that trigger failure. The metropole does not manipulate beliefs to change behavior conditional on
survival; it alters the information structure to make survival feasible in the first place.

Intuitively, when survival values are locally concave in beliefs, the principal prefers the
expectation of beliefs to their realization: opacity dominates transparency because it suppresses
belief realizations that fall over the survival cliff.

\section{Conclusions}
\label{sec:conclusions}
This paper shows that when failure is irreversible, the value of information can reverse. This reversal arises because equilibria that sustain activity under irreversible failure typically operate at the boundary of viability, leaving no interior buffer against collapse. In dynamic environments with absorbing failure, greater informational precision can increase
the likelihood of catastrophic outcomes by amplifying belief-driven actions that trigger collapse. Irreversibility endogenously generates local concavity of survival values in beliefs, breaking the standard presumption that more information is always beneficial.

The colonial extraction example shows that the same incompatibility between transparency
and viability arises whether opacity is chosen jointly (Section 8) or unilaterally by a
principal (Section 9), underscoring that the mechanism is structural rather than strategic.

The central implication is behavioral. Agents do not respond to fragility by distorting beliefs
or acting irrationally; they respond by adjusting how much information they allow to shape their
actions. Strategic opacity—modeled as a choice of the information structure via Blackwell
garbling—emerges as a rational survival response when information is weaponized. Opacity is not
chosen to improve allocative efficiency conditional on survival, but to make survival itself
feasible.

Conceptually, the contribution is to identify a broad and economically central class of dynamic
environments in which concavity of the value function is not a primitive of preferences, but an
endogenous consequence of irreversible failure. Once this geometry is recognized, the logic of
information design follows as a consequence rather than an assumption: opacity is optimal because
it concavifies survival values by suppressing belief dispersion near failure thresholds.

The analysis suggests a more general lesson. Whenever economic systems exhibit absorbing failure
and belief-triggered actions—such as default, runs, revolts, or institutional collapse—the
information structure itself becomes a first-order strategic margin. Understanding when
transparency stabilizes behavior, and when it instead destabilizes it, requires accounting for
the geometry induced by irreversibility. This perspective opens new directions for research on
information design, fragility, and policy in environments where failure cannot be undone. An implication of the analysis is a new impossibility result: in dynamic environments with absorbing failure, transparency and viability are generically incompatible once beliefs themselves can trigger irreversible collapse.

\bibliographystyle{apalike}
\bibliography{so}
\newpage
\appendix

\section{Appendix: Proofs}

\subsection{From Blackwell garbling to convex order of posteriors}

The main results use the comparative statics that more informative signals generate more dispersed
posteriors in convex order. In general state spaces this can be formulated using martingale
couplings; to keep the argument explicit, we state the result for a binary state, which suffices
for the constructions below and captures the persuasion geometry.

\begin{lemma}[Blackwell order induces convex-order spread of posteriors in the binary case]\label{lem:blackwell_convex}
Let $S=\{0,1\}$ and let $b\in[0,1]$ denote the posterior probability of state $1$.
Fix a prior $b_0\in(0,1)$. Let $\kappa$ and $\kappa'$ be two signal structures such that $\kappa'$
is a Blackwell garbling of $\kappa$. Let $B$ and $B'$ denote the induced posterior random variables
under $\kappa$ and $\kappa'$ (given prior $b_0$). Then:
\begin{enumerate}[label=(\roman*),leftmargin=2em]
\item $\E[B]=\E[B']=b_0$;
\item $B$ is a mean-preserving spread of $B'$ in convex order, i.e.\ for every convex $\varphi:[0,1]\to\RR$,
\[
\E[\varphi(B)]\;\ge\;\E[\varphi(B')].
\]
\end{enumerate}
Equivalently, for every concave $\psi:[0,1]\to\RR$, $\E[\psi(B)]\le \E[\psi(B')]$.
\end{lemma}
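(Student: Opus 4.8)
The plan is to realize $B'$ as a conditional expectation of $B$ on a common probability space and then invoke conditional Jensen's inequality. First I would unpack the definition of Blackwell garbling: $\kappa'$ being a garbling of $\kappa$ means there is a Markov (stochastic) kernel $M(\cdot\mid x)$ on the signal space such that $\kappa'(\cdot\mid\theta)=\int M(\cdot\mid x)\,\kappa(dx\mid\theta)$ for each state $\theta\in\{0,1\}$. Using this, I would construct a single probability space carrying $(\theta,X,Y)$ with $\theta\sim b_0$, $X\mid\theta\sim\kappa(\cdot\mid\theta)$, and $Y$ drawn from $M(\cdot\mid X)$ independently of $\theta$ given $X$. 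By construction the marginal law of $(\theta,X)$ reproduces the Bayesian model under $\kappa$, the marginal law of $(\theta,Y)$ reproduces the model under $\kappa'$, and $\theta\to X\to Y$ is a Markov chain.

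Next I would identify the posteriors as random variables on this space: $B=\Pr(\theta=1\mid X)$ and $B'=\Pr(\theta=1\mid Y)$. The Markov property gives the conditional independence $\theta\perp Y\mid X$, hence $\Pr(\theta=1\mid X,Y)=\Pr(\theta=1\mid X)=B$, and taking conditional expectations, $B'=\Pr(\theta=1\mid Y)=\E[\,\Pr(\theta=1\mid X,Y)\mid Y\,]=\E[B\mid Y]$. Claim (i) is then immediate from the tower property: $\E[B']=\E[\E[B\mid Y]]=\E[B]$, and the same tower property (the posterior mean equals the prior) gives $\E[B]=b_0$, so $\E[B]=\E[B']=b_0$.

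For claim (ii), fix any convex $\varphi:[0,1]\to\RR$. Since $B$ is bounded, hence integrable, conditional Jensen's inequality gives $\varphi(\E[B\mid Y])\le\E[\varphi(B)\mid Y]$ almost surely; substituting $B'=\E[B\mid Y]$ and taking expectations yields $\E[\varphi(B')]\le\E[\varphi(B)]$, which is exactly the convex-order statement that $B$ is a mean-preserving spread of $B'$. The equivalent concave formulation follows by applying this with $\varphi=-\psi$.

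The main obstacle I expect is the bookkeeping around the garbling/coupling construction rather than any delicate analysis: one must ensure the kernel $M$ exists in the stated form (this is the content of Blackwell's characterization of the informativeness order, which may simply be cited), that the constructed joint law has the correct two marginals so that $B$ and $B'$ genuinely are the posteriors of the two original models, and that the conditional-independence relation $\theta\perp Y\mid X$ is invoked correctly when passing from $\Pr(\theta=1\mid X,Y)$ to $B$. In the binary state space all of these steps are routine and no martingale-coupling machinery on general $\Delta(S)$ is needed, so once the coupling is set up the inequality is a single application of conditional Jensen.
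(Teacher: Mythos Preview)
Your proposal is correct and follows essentially the same approach as the paper: construct a joint probability space via the garbling kernel so that $B'=\E[B\mid\mathcal{F}']$ for the $\sigma$-field generated by the garbled signal, and then apply conditional Jensen's inequality. Your version is more explicit about verifying the Markov structure $\theta\to X\to Y$ and the identity $\Pr(\theta=1\mid X,Y)=B$, but the core argument is identical to the paper's.
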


\begin{proof}
By Bayes plausibility, posteriors average to the prior, hence $\E[B]=\E[B']=b_0$.
Because $\kappa'$ is a Blackwell garbling of $\kappa$, there exists a stochastic kernel $G$ such that
signals under $\kappa'$ are obtained by passing signals under $\kappa$ through $G$. This induces a
martingale coupling between posteriors: $B'=\E[B\mid \mathcal{F}']$ for a coarser $\sigma$-field
$\mathcal{F}'$ generated by the garbled signal. Jensen's inequality yields
$\E[\varphi(B')]=\E[\varphi(\E[B\mid \mathcal{F}'])]\le \E[\E[\varphi(B)\mid \mathcal{F}']]=\E[\varphi(B)]$
for every convex $\varphi$, establishing convex order. 
\end{proof}

\begin{remark}
Lemma \ref{lem:blackwell_convex} is the standard information-design comparison: a more informative
experiment generates a more dispersed distribution of posteriors with the same mean \citep{blackwell1953comparison,kamenica2011persuasion}.
Assumption \ref{ass:opacity_spreads} can be interpreted as imposing this convex-order monotonicity
in the general state case.
\end{remark}

\subsection{Proof of Theorem \ref{thm:opacity_expands}}

\begin{proof}
Fix $T < \infty$ and a strategy profile $\sigma$. Suppose the environment exhibits weaponized information at $(\sigma, T, b^\dagger)$ in the sense of Definition~\ref{def:weaponized}.

\textbf{Step 1: Local concavity.} By Lemma~\ref{lem:concavity}, Definition~\ref{def:weaponized} implies that $V_T(\cdot; \sigma)$ is locally concave in a neighborhood $N$ of $b^\dagger$.

\textbf{Step 2: Jensen's inequality.} Fix $b \in N$. For any distribution $\pi$ over posteriors with mean $b$ (i.e., $\int \mu \, d\pi(\mu) = b$), Proposition~\ref{prop:jensen} implies
\begin{equation}
    \int V_T(\mu; \sigma) \, d\pi(\mu) \leq V_T(b; \sigma),
\end{equation}
with strict inequality for non-degenerate $\pi$ supported in $N$.

\textbf{Step 3: Opacity comparison.} Under Assumption~\ref{ass:opacity_spreads}, increasing opacity from $\lambda$ to $\lambda' > \lambda$ induces a mean-preserving contraction of the posterior distribution. Specifically, let $\pi_\lambda$ and $\pi_{\lambda'}$ denote the distributions of posteriors induced by $\kappa^{\varepsilon,\lambda}$ and $\kappa^{\varepsilon,\lambda'}$, respectively, conditional on the same prior. Then $\pi_{\lambda'}$ is a mean-preserving contraction of $\pi_\lambda$ in convex order. By Step~2, this implies
\begin{equation}
    \mathbb{E}_{\pi_{\lambda'}}[V_T(b; \sigma)] \geq \mathbb{E}_{\pi_\lambda}[V_T(b; \sigma)].
\end{equation}
Hence expected finite-horizon survival is weakly increasing in $\lambda$.

\textbf{Step 4: Limit-viability.} Recall that $V_T(b; \sigma) = 1 - \Pr^\sigma(\tau_F \leq T \mid b_0 = b)$ by definition. Increasing $\lambda$ weakly increases $\mathbb{E}[V_T(b_T; \sigma)]$, which is equivalent to weakly decreasing $\Pr^\sigma(\tau_F \leq T)$ for each fixed $T$. 

Therefore, if $\sigma$ is limit-viable under $\lambda'$ (meaning $\lim_{\varepsilon \downarrow 0} \Pr^{\sigma, \kappa^{\varepsilon,\lambda'}}(\tau_F \leq T) = 0$ for all $T$), it remains limit-viable under any $\lambda > \lambda'$. Consequently, increased opacity weakly expands the set of limit-viable equilibria.

\textbf{Step 5: Necessity of opacity.} Suppose maximal transparency $\lambda = \underline{\lambda}$ minimizes survival probability. If $\lim \inf_{\varepsilon \downarrow 0} \Pr^\sigma(\tau_F \leq T) > 0$ for some $T$ under $\lambda = \underline{\lambda}$, then $\sigma$ cannot be limit-viable under maximal transparency. By Step~4, any limit-viable equilibrium must satisfy $\lambda^* > \underline{\lambda}$, i.e., require strictly positive opacity.
\end{proof}

\subsection{Proof of Theorem \ref{thm:transparency_collapse}: explicit construction}
\begin{proof}
We construct a two-player dynamic game in which strategic interaction is essential and
transparency destroys equilibrium viability.

\medskip
\noindent\textbf{Environment.}
There are two players $i\in\{1,2\}$.
The payoff-relevant state is $\theta\in\{0,1\}$, where $\theta=1$ represents a fragile but
profitable regime and $\theta=0$ represents a safe regime.
The prior belief is $\Pr(\theta=1)=q\in(0,1)$.
Failure is absorbing.

Time is discrete. At each date $t$, if failure has not yet occurred, each player simultaneously
chooses an action $a_i^t\in\{C,W\}$, interpreted as \emph{continue} or \emph{withdraw}.
Failure occurs immediately if at least one player chooses $W$.
If failure occurs, payoffs are zero thereafter.

\medskip
\noindent\textbf{Payoffs.}
If both players choose $C$ at date $t$, each player receives flow payoff
\[
u_i(\theta)=
\begin{cases}
R>0 & \text{if } \theta=1,\\
0 & \text{if } \theta=0.
\end{cases}
\]
If failure occurs at any date, continuation payoffs are zero.
Players discount with factor $\delta\in(0,1)$.

\medskip
\noindent\textbf{Information and opacity.}
At each date, players observe private signals
\[
x_i=\theta+\sqrt{\varepsilon\lambda}\,\eta_i,\qquad \eta_i\sim N(0,1),
\]
where $\varepsilon\downarrow 0$ and $\lambda\in[\underline{\lambda},\bar{\lambda}]$ parametrizes
opacity: higher $\lambda$ corresponds to a Blackwell garbling of the signal.
Signals satisfy the monotone likelihood ratio property.

\medskip
\noindent\textbf{Equilibrium structure.}
Standard arguments imply that any symmetric equilibrium is characterized by a cutoff strategy:
there exists $c(\varepsilon,\lambda)$ such that each player chooses $C$ if and only if
their posterior belief $\Pr(\theta=1\mid x_i)$ exceeds $c(\varepsilon,\lambda)$.

\medskip
\noindent\textbf{Failure under transparency.}
Fix parameters such that:
\[
qR>0\qquad\text{and}\qquad qR < \delta R.
\]
Under maximal transparency $\lambda=\underline{\lambda}$, posterior beliefs concentrate as
$\varepsilon\downarrow 0$.
Conditional on $\theta=1$, with positive probability on any finite horizon, at least one player
receives a signal inducing a posterior below the equilibrium cutoff.
Because actions are strategic complements and failure is triggered by unilateral withdrawal,
any equilibrium cutoff must prescribe $W$ at sufficiently pessimistic beliefs.
Hence, for any symmetric equilibrium strategy profile $\sigma$ and any finite $T<\infty$,
\[
\liminf_{\varepsilon\downarrow 0}
\Pr_{\sigma,\kappa_{\varepsilon,\underline{\lambda}}}(\tau_F\le T)>0.
\]
Therefore no equilibrium under maximal transparency satisfies the limit-viability criterion.

\medskip

\noindent \textbf{Restoration under opacity.}
Now consider sufficiently high opacity $\lambda=\bar{\lambda}$.
For large $\lambda$, signals become uninformative and posterior beliefs remain equal to the prior
$q$ for all players and all histories.
By construction, continuing is a best response at the prior belief, since $qR>0$ and failure
yields zero.
Thus the strategy profile in which both players always choose $C$ constitutes an equilibrium.
Because withdrawal never occurs, failure is avoided with probability one for all horizons:
\[
\Pr_{\sigma,\kappa_{\varepsilon,\bar{\lambda}}}(\tau_F\le T)=0
\quad\text{for all }T<\infty\text{ and all }\varepsilon.
\]
This equilibrium is therefore limit-viable.

\textbf{Genericity.} We now establish that the failure of limit-viability under transparency holds on an open set of parameters. Fix $R > 0$ and $\delta \in (0,1)$. The construction above requires $qR > 0$ and $qR < \delta R$, which holds if and only if $0 < q < \delta$. 

For any $(q, R, \delta)$ in the open set $\Omega := \{(q, R, \delta) : q \in (0, \delta), R > 0, \delta \in (0,1)\}$, the equilibrium cutoff $c(\varepsilon, \underline{\lambda})$ satisfies the indifference condition characterizing symmetric monotone equilibria. By continuity of best-response correspondences in payoff parameters (see, e.g., Fudenberg and Tirole, 1991, Theorem 1.2), there exists $\varepsilon_0(q, R, \delta) > 0$ such that for all $\varepsilon < \varepsilon_0$, any symmetric equilibrium under maximal transparency satisfies
\begin{equation}
    \Pr^{\sigma, \kappa^{\varepsilon, \underline{\lambda}}}(\tau_F \leq T) \geq \delta/2 > 0
\end{equation}
for some finite $T < \infty$ (specifically, $T = 1$ suffices). 

Since $\Omega$ is open in the product topology on $[0,1] \times \mathbb{R}_+ \times (0,1)$, this establishes that failure of limit-viability under transparency holds generically (i.e., on an open set of positive measure) in the space of priors and payoff parameters.

\medskip
We conclude that transparency can destroy equilibrium viability in dynamic games with absorbing
failure, while sufficient opacity restores equilibrium existence.
\end{proof}

\subsection{Proof of Theorem \ref{thm:optimal_opacity}: necessity}

\begin{proof}[Proof of Theorem \ref{thm:optimal_opacity}]
Consider the extended game in which agents choose $\lambda$ prior to play and then play an equilibrium
of the induced game. Suppose, toward a contradiction, that there exists an equilibrium of the extended
game with $\lambda^\ast=\underline{\lambda}$.

By assumption (b) in Theorem \ref{thm:optimal_opacity}, under $\lambda=\underline{\lambda}$ every equilibrium
$\sigma$ fails limit-viability: there exist $T<\infty$ and $\delta>0$ such that
\[
\liminf_{\varepsilon\downarrow 0}\Pr_{\sigma,\kappa^{\varepsilon,\underline{\lambda}}}(\tau_F\le T)\ge \delta.
\]
Therefore, the continuation payoff induced by $(\sigma^\ast,\underline{\lambda})$ loses at least a $\delta$
fraction of the survival-weighted flow payoffs on horizon $T$ in the vanishing-noise limit.

Now fix $\bar{\lambda}>\underline{\lambda}$ and a limit-viable equilibrium $\bar{\sigma}$ under $\bar{\lambda}$.
By assumption (a), some agent obtains strictly positive flow payoff on some failure-free histories. Since
failure is absorbing with zero continuation payoff, the expected payoff under $(\bar{\sigma},\bar{\lambda})$
on horizon $T$ strictly exceeds that under any profile with failure probability bounded below by $\delta$,
for all sufficiently small $\varepsilon$. Hence at least one agent has a profitable deviation from choosing
$\underline{\lambda}$ to choosing $\bar{\lambda}$ (and then playing $\bar{\sigma}$), contradicting that
$\lambda^\ast=\underline{\lambda}$ can arise in equilibrium of the extended game.

Therefore any equilibrium of the extended game must satisfy $\lambda^\ast>\underline{\lambda}$.
\end{proof}

\subsection{Proof of Lemma 1 (Absorbing failure generates local concavity)}
\label{app:proof-prop2}

\begin{proof}[Proof of Lemma 1]
We provide a fully formal argument under a standard ``trigger'' structure that makes the heuristic in the text precise.
The key steps are: (i) absorbing failure plus a discrete action switch implies a \emph{downward kink} in the one-step
survival function; (ii) dynamic survival values satisfy a Bellman recursion that preserves concavity locally, so the kink
propagates to any finite horizon.

\medskip
\noindent \textbf{Step 0 (A precise trigger formulation).}
For transparency, we specialize to the canonical one-dimensional belief case; this is without loss for the local claim,
since local concavity on $\Delta(S)$ can be verified along line segments.\footnote{A function $V$ on a convex set is concave
iff its restriction to every line segment is concave.}
Let the relevant belief coordinate be $b \in (0,1)$, interpreted as the posterior probability of a ``safe'' regime.
Fix a strategy profile $\sigma$ and consider the induced (common) posterior process $\{b_t\}_{t\ge 0}$.

Assume the following \emph{trigger} property holds at some $b^\dagger\in(0,1)$:
there exist neighborhoods $U^-=(b^\dagger-\eta,b^\dagger)$ and $U^+=(b^\dagger,b^\dagger+\eta)$ and measurable
hazard functions $h^-,h^+:(b^\dagger-\eta,b^\dagger+\eta)\to[0,1]$ such that the one-step failure probability
under $\sigma$ satisfies
\begin{equation}
\Pr_\sigma(\tau_F = 1 \mid b_0=b)
=
\begin{cases}
h^-(b) & \text{if } b<b^\dagger,\\[3pt]
h^+(b) & \text{if } b>b^\dagger,
\end{cases}
\qquad b \in (b^\dagger-\eta,b^\dagger+\eta),
\label{eq:hazard-piecewise}
\end{equation}
with the following properties:
\begin{enumerate}[label=(H\arabic*),leftmargin=2em]
\item \textbf{(Discontinuous marginal hazard at the trigger)} $h^-$ and $h^+$ are continuous on their domains,
$\lim_{b\uparrow b^\dagger}h^-(b)=\lim_{b\downarrow b^\dagger}h^+(b)=: \bar h$, but the one-sided derivatives satisfy
\begin{equation}
\lim_{b\uparrow b^\dagger}(h^-)'(b) \;>\; \lim_{b\downarrow b^\dagger}(h^+)'(b).
\label{eq:slope-drop}
\end{equation}
\item \textbf{(Local regularity)} $h^-$ and $h^+$ are twice continuously differentiable on $U^-$ and $U^+$, respectively.
\end{enumerate}

\noindent\emph{Interpretation.} The discrete action switch in Definition~\ref{def:weaponized} is encoded as a \emph{slope drop} at $b^\dagger$:
below $b^\dagger$ the equilibrium action is failure-inducing (high sensitivity of failure risk to beliefs), whereas above
$b^\dagger$ the action is non-failure-inducing (low sensitivity). Absorbing failure enters through the fact that failure is
an absorbing event and therefore one-step failure risk directly truncates continuation survival.

\medskip
\noindent \textbf{Step 1 (One-step survival is locally concave).}
Define the one-step survival value under $\sigma$:
\[
V_1(b;\sigma) \equiv 1-\Pr_\sigma(\tau_F \le 1 \mid b_0=b)=1-\Pr_\sigma(\tau_F=1\mid b_0=b).
\]
By \eqref{eq:hazard-piecewise}, on $(b^\dagger-\eta,b^\dagger+\eta)$ we can write
\[
V_1(b;\sigma)=
\begin{cases}
1-h^-(b) & \text{if } b<b^\dagger,\\[3pt]
1-h^+(b) & \text{if } b>b^\dagger.
\end{cases}
\]
Hence $V_1$ is continuous at $b^\dagger$ and twice continuously differentiable on each side.
Notice that \eqref{eq:slope-drop} is equivalent to a \emph{downward kink} in $V_1$:
\[
\lim_{b\uparrow b^\dagger} (V_1)'(b;\sigma)
=
-\lim_{b\uparrow b^\dagger}(h^-)'(b)
\;<\;
-\lim_{b\downarrow b^\dagger}(h^+)'(b)
=
\lim_{b\downarrow b^\dagger} (V_1)'(b;\sigma).
\]
Therefore the derivative of $V_1$ has a \emph{jump upward} at $b^\dagger$ (left slope smaller than right slope),
which is exactly the local signature of concavity for a continuous piecewise-$C^1$ function.

Formally, define $N:=(b^\dagger-\eta,b^\dagger+\eta)$ and let $b_1<b^\dagger<b_2$ in $N$.
For any $\alpha\in(0,1)$ with $b_\alpha:=\alpha b_1+(1-\alpha)b_2$,
concavity of $V_1$ on $N$ is equivalent to
\begin{equation}
V_1(b_\alpha;\sigma)\;\ge\;\alpha V_1(b_1;\sigma)+(1-\alpha)V_1(b_2;\sigma).
\label{eq:concavity-ineq}
\end{equation}
Because $V_1$ is $C^2$ on each side and has the kink described above, \eqref{eq:concavity-ineq} holds on a possibly
smaller neighborhood $N_1\subseteq N$ around $b^\dagger$.\footnote{A standard way to verify this is to use the supporting-line
characterization of concavity: the kink implies existence of a supporting line at $b^\dagger$ with slope in the interval
$\big[\lim_{b\uparrow b^\dagger}(V_1)'(b),\,\lim_{b\downarrow b^\dagger}(V_1)'(b)\big]$, and $C^2$-regularity on each side ensures
the graph lies below that line locally.}
Thus $V_1(\cdot;\sigma)$ is concave on some neighborhood $N_1$ of $b^\dagger$.

\medskip
\noindent \textbf{Step 2 (Finite-horizon survival satisfies a concavity-preserving recursion).}
For $T\ge 1$, recall
\[
V_T(b;\sigma)=1-\Pr_\sigma(\tau_F\le T\mid b_0=b)=\Pr_\sigma(\tau_F>T\mid b_0=b).
\]
By conditioning on survival through period $1$ and using that failure is absorbing,
$V_T$ satisfies the Bellman recursion
\begin{equation}
V_T(b;\sigma)
=
\underbrace{V_1(b;\sigma)}_{\text{survive period 1}}
\cdot
\underbrace{\mathbb{E}_\sigma\!\left[V_{T-1}(b_1;\sigma)\mid b_0=b,\;\tau_F>1\right]}_{\text{survive periods }2,\dots,T\text{ given survival at }1}.
\label{eq:VT-recursion}
\end{equation}
Define the continuation operator $\mathcal{T}$ by
\[
(\mathcal{T}W)(b):=\mathbb{E}_\sigma\!\left[W(b_1)\mid b_0=b,\;\tau_F>1\right].
\]
Because the posterior process is generated by Bayesian updating, the conditional law of $b_1$ given $b_0=b$ is
\emph{Bayes-plausible}: $\mathbb{E}[b_1\mid b_0=b]=b$ (martingale property), and therefore $\mathcal{T}$ preserves concavity:
if $W$ is concave on a neighborhood $N'$, then $(\mathcal{T}W)$ is concave on $N'$.\footnote{This is Jensen:
for concave $W$, $W(b)=W(\mathbb{E}[b_1])\ge \mathbb{E}[W(b_1)]$, and the same argument along mixtures of priors yields concavity of the mapping.}
Hence, if $V_{T-1}(\cdot;\sigma)$ is concave on a neighborhood, then so is $\mathcal{T}V_{T-1}$.

\medskip

\noindent \textbf{Step 3 (Propagation with uniform bounds).} We prove by induction that there exists $\bar{\eta} > 0$ such that for all $T \geq 1$, $V_T(\cdot; \sigma)$ is concave on $\bar{N} := (b^\dagger - \bar{\eta}, b^\dagger + \bar{\eta})$.

\textbf{Base case $T = 1$:} By Step~1, $V_1$ is concave on $N_1 = (b^\dagger - \eta, b^\dagger + \eta)$ for some $\eta > 0$.

\textbf{Key observation:} The kink magnitude at $b^\dagger$ is determined by the equilibrium action switch encoded in $\sigma$, not by the horizon $T$. Specifically, the derivative jump satisfies
\begin{equation}
    \kappa := -\left[ \lim_{b \uparrow b^\dagger} (h^-)'(b) - \lim_{b \downarrow b^\dagger} (h^+)'(b) \right] > 0,
\end{equation}
where $\kappa$ is a property of $\sigma$ and the primitives $(Q, F, u_i)$.

\textbf{Uniform curvature bounds:} Because survival probabilities satisfy $0 \leq V_T(b; \sigma) \leq 1$ and $0 \leq (\mathcal{T} V_{T-1})(b; \sigma) \leq 1$, and because $V_1$ and $\mathcal{T} V_{T-1}$ are $C^2$ on each side of $b^\dagger$ (by Assumption~\ref{ass:regularity}), the second derivatives $V_1''$ and $(\mathcal{T} V_{T-1})''$ are uniformly bounded on any compact subset of $(b^\dagger - \eta, b^\dagger + \eta)$.

\textbf{Product concavity:} For $b \neq b^\dagger$, the second derivative satisfies
\begin{equation}
    V_T''(b) = V_1''(b) \cdot (\mathcal{T} V_{T-1})(b) + 2 V_1'(b) \cdot (\mathcal{T} V_{T-1})'(b) + V_1(b) \cdot (\mathcal{T} V_{T-1})''(b).
\end{equation}
Near $b^\dagger$, the dominant contribution to concavity comes from the kink in $V_1$. The one-sided derivatives of $V_T$ satisfy
\begin{equation}
    \lim_{b \uparrow b^\dagger} V_T'(b; \sigma) - \lim_{b \downarrow b^\dagger} V_T'(b; \sigma) = -\kappa \cdot (\mathcal{T} V_{T-1})(b^\dagger; \sigma) + o(1),
\end{equation}
where $o(1) \to 0$ as the neighborhood shrinks. Since $(\mathcal{T} V_{T-1})(b^\dagger) > 0$ when $V_{T-1}$ is concave near $b^\dagger$ (survival probability is strictly positive), the kink is preserved with magnitude bounded below.

\textbf{Induction step:} Choose $\bar{\eta} = \eta/2$. By induction hypothesis, $V_{T-1}$ is concave on $\bar{N}$. Then $\mathcal{T} V_{T-1}$ is concave on $\bar{N}$ by Step~2. The product $V_T = V_1 \cdot \mathcal{T} V_{T-1}$ inherits the downward kink from $V_1$ with magnitude bounded below by $\kappa/2$ on $\bar{N}$. By the supporting-line characterization of concavity (footnote~\ref{fn:supporting}), this ensures $V_T$ is concave on $\bar{N}$.
\footnote{\label{fn:supporting}A continuous function on an interval is concave on a neighborhood if and only if it admits a supporting line at every interior point of that neighborhood.}

Therefore, $V_T(\cdot; \sigma)$ is concave on the uniform neighborhood $\bar{N}$ for all $T \geq 1$, with $\bar{N}$ independent of $T$.

\end{proof}
\end{document}